\documentclass[twoside]{IEEEtran}

\usepackage{amsmath,amssymb}
\usepackage{bm}
\usepackage{ascmac}

\usepackage{color}

\usepackage{amsmath}
\usepackage{amsthm}
\usepackage{amssymb}
\usepackage{algorithm}
\usepackage{algorithmic}

\usepackage{graphicx}  
\newtheorem{theorem}{Theorem}

\newtheorem{lemma}{Lemma}

\newtheorem{definition}{Definition}

\begin{document}

\title{Information-theoretically secure equality-testing protocol with dispute resolution}

\author{Go Kato, Mikio Fujiwara, and Toyohiro Tsurumaru%
\thanks{
Go Kato and
Mikio Fujiwara are with National Institute of Information and Communications Technology (NICT), Nukui-kita, Koganei, Tokyo 184-8795, Japan (e-mail: go.kato@nict.go.jp, fujiwara@nict.go.jp).
Toyohiro Tsurumaru is with Mitsubishi Electric Corporation, Information Technology R\&D Center,
5-1-1 Ofuna, Kamakura-shi, Kanagawa, 247-8501, Japan (e-mail: Tsurumaru.Toyohiro@da.MitsubishiElectric.co.jp).
}
}

\maketitle

\begin{abstract}
There are often situations where two remote users each have data, and wish to (i) verify the equality of their data, and (ii) whenever a discrepancy is found afterwards, determine which of the two modified his data.
The most common example is where they want to authenticate messages they exchange.
Another possible example is where they have a huge database and its mirror in remote places, and whenever a discrepancy is found between their data, they can determine which of the two users is to blame.

Of course, if one is allowed to use computational assumptions, this function can be realized readily, e.g., by using digital signatures.
However, if one needs information-theoretic security, there is no known method that realizes this function efficiently, i.e., with secret key, communication, and trusted third parties all being sufficiently small.

In order to realize this function efficiently with information-theoretic security, we here define the ``equality-testing protocol with dispute resolution'' as a new framework.
The most significant difference between our protocol and the previous methods with similar functions is that we allow the intervention of a trusted third party when checking the equality of the data.
In this new framework, we also present an explicit protocol that is information-theoretically secure and efficient.

\end{abstract}

\begin{IEEEkeywords}
Information-theoretic security, data integrity, non-repudiation, equality-testing, quantum key distribution.
\end{IEEEkeywords}

\section{Introdution}

There are often situations where two remote users each have data which are supposed to be the same, and they want to (i) verify that their data are indeed the same, and (i) whenever a discrepancy between their data is found afterwards, determine which of the two actually modified his data.
Here, we consider ``equality-testing protocols with dispute resolution'' (or ET protocols for short) for realizing this function and study its information-theoretic security.

The most straightforward application of this protocol would be where the two party want to authenticate messages they exchange, and also prevent further change.
Another possible example is where the two users have a huge database and its mirror in remote places.
If this type of protocols is realized, they will be able to confirm the equality of the data on both sides, and moreover, whenever a discrepancy is found afterwards, they can determine which of the two users is to blame.

Of course, if one is allowed to use computational assumptions, our ET protocol can be realized readily, e.g., by using digital signatures.
However, if one needs information-theoretic security, there is no known method for realizing it efficiently, i.e., with sufficiently small amounts of secret keys and communication, and with at most one trusted third party (TTP).
At first glance, this function seems feasible by a straightforward use of the $A^2$ code \cite{333869,10.1007/3-540-48329-2_29,10.1007/3-540-39118-5_15,10.1007/3-540-49264-X_25} or the unconditionally secure digital signature schemes (USDS) \cite{10.1007/3-540-44448-3_11,SwansonStinson+2016+35+67}, but these methods consume enormous resources (communication and secret key) and thus not practical.
More precisely, both the communication length and the secret key length of these methods must exceed the data length, so they become virtually impossible when one needs to handle larger data, such as an entire data center (see section \ref{sec:background_of_definition}).

Here we rigorously define a framework for ET protocols which admits information-theoretic security and efficient implementation.
The most significant difference between our protocol and the previous methods with similar functions (i.e., $A^2$ codes and USDSs) is that we allow the intervention of a TTP in the equality-testing phase, which corresponds to message authentication tag generation in the previous methods (see \ref{sec:TTP_intervention} section for the detail).
Then in this generalized framework, we also present an explicit protocol that is information-theoretically secure and efficient.
Namely, we present a protocol achieving
\begin{align}
&{\rm lengths\ of\ communication\ and\ secret\ key}\nonumber\\
&=O\left(\log (1/\epsilon)\cdot\log r\right)
\label{eq:parameter_order_ETP}
\end{align}
with $r$ being the data length and $\epsilon$ being the security parameter (success probability of a malicious player).
Note that parameter region (\ref{eq:parameter_order_ETP}) indeed overcomes the limitations of the aforementioned previous methods.

We note that our ET protocols are particularly useful in and compatible with quantum key distribution networks (QKDNs) \cite{itut_y3800, Sasaki:11}, for the following two reasons.
First, a QKDN normally has a key management authority, which can be used as the TTP for our protocol.
Second, our protocols need to be supplied with a new secret key each time they are executed.
The only ways to fulfill this requirement, at least at present, are either to use the so-called trusted courier or to use a QKDN.

Note that this implies that our protocols greatly enhance the functionality of QKDNs.
It is often thought that QKDNs can only be used for limited purposes, namely, one-to-one secret communication and authentication.
Our protocols, however, indeed provide more versatile cryptographic functionalities, such as the message authentication with non-repudiation, or the mirroring of huge databases with modification prevention, mentioned at the beginning of this section.

Finally we note that our ET method can also be seen as an improvement of the method proposed and implemented in Ref. \cite{9650707} which realizes ITS message authentication function using QKDN, secret sharing, and TTPs.
In this method, tamper resistance and dispute resolution functions are implemented using two types of TTPs: a shared calculator and a verifier.
Our method here realizes a similar function with a reduced number of TTPs, i.e., with only one TTP, and benefits in terms of cost savings and scalability.

\section{Definition of equality-testing protocol with dispute resolution}

\subsection{Setting and the definition of the protocol}

Suppose that a trusted third party (TTP) and two players, Alice and Bob, are connected to each other by unauthenticated public channels\footnote{Channels where messages are neither encrypted nor authenticated.}.
Also suppose that Alice and Bob each holds data $m^A, m^B\in\{0,1\}^r$, which are supposed to be equal.
Our goal here is to (i) verify the equality of $m^A$ and $m^B$, and (ii) whenever a discrepancy is found afterwards, determine with certainty which of the two actually modified his data.
To this goal we define the following type of protocols.
\begin{definition}[Equality-testing protocol with dispute resolution (ET protocol, for short), Fig. \ref{fig:ET_schematic}]
\label{def:ET_protocol}
A equality-testing protocol with dispute resolution is where Alice and Bob check the equality of their data with the help of the TTP, and consists of the following three phases.
\begin{enumerate}
\item {\bf Key-distribution phase}: The TTP distributes secret keys to Alice and Bob \footnote{They share secret keys by using some method other than the public channels.}.
\item {\bf Equality-testing phase}: The TTP communicates with Alice and Bob, and verifies the equality of their data (whether $m^A=m^B$ or not).
If the equality is confirmed, the TTP announces ``success"; otherwise, it announces ``failure".
\item {\bf Dispute-resolution phase}:
After a successful completion of the equality-testing phase, if there is a dispute between Alice and Bob about the equality of their data, the TTP arbitrates as follows:

The TTP receives data $m^{A*}, m^{B*}$ from Alice and Bob respectively, which they claim to be correct.
Then he compares them with the communication content of the equality-testing phase, and announces one of the following: ``both are correct", ``$m^{A*}$ is correct", ``$m^{B*}$ is correct" or ``undecidable".
\end{enumerate}
\end{definition}

Throughout the paper, whenever the TTP ``announces'' something, it means that he sends the same message to both Alice and Bob simultaneously.

The ``dispute'' that triggers the dispute-resolution phase can occur, e.g., when an outsider (other than Alice, Bob, or the TTP) retrieves the same part of Alice's and Bob's data respectively, which should equal, but finds a discrepancy.

\begin{figure}[htbp]
\begin{center}
 \includegraphics[width=0.8\linewidth]{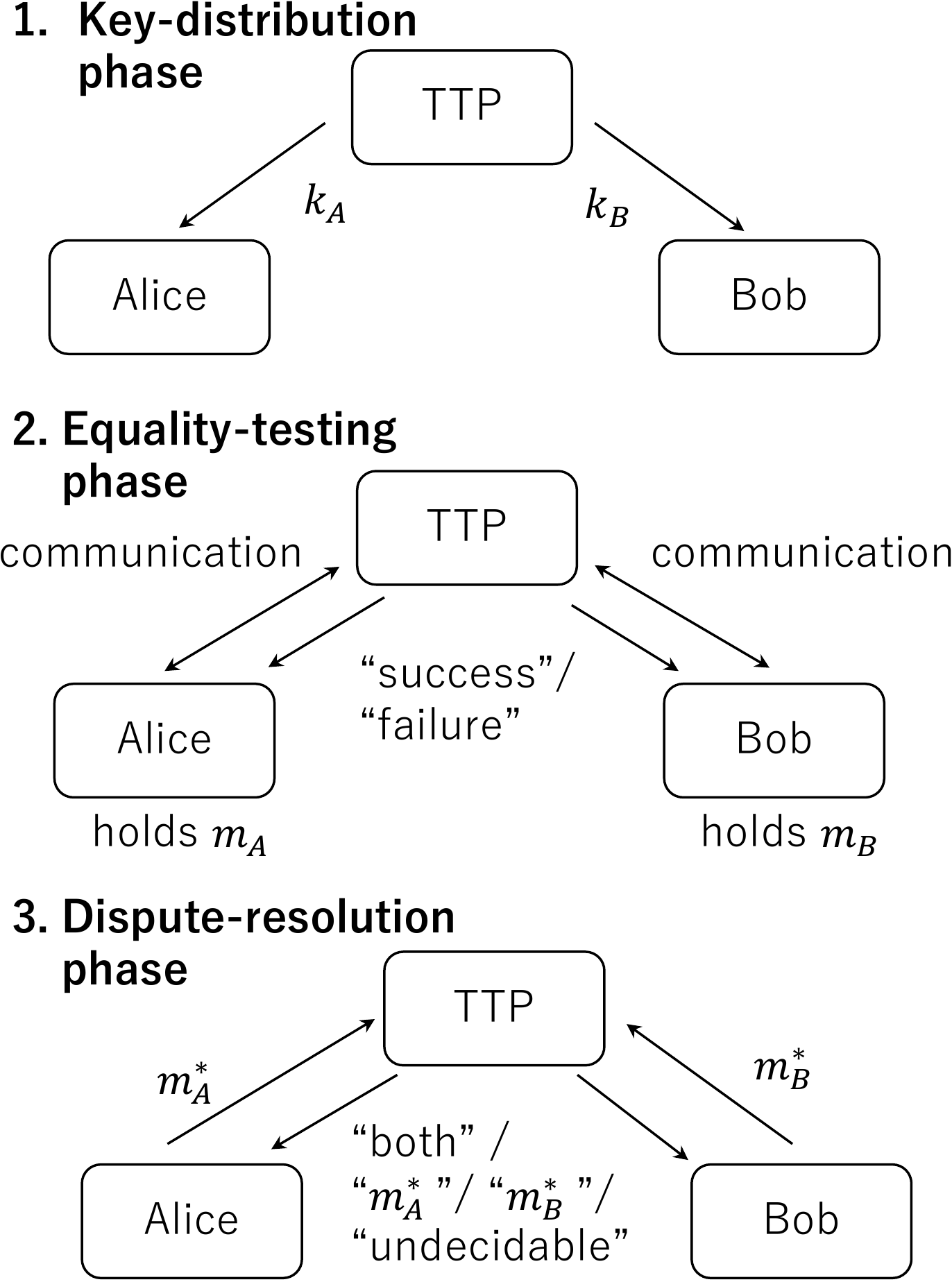}
 \caption{Conceptual diagram of our ET protocol.}
 \label{fig:ET_schematic}
\end{center}
\end{figure}

\subsection{Security criteria}

The basic concept behind our security is that ``honest players lose nothing.''
The situations can be classified into the following three cases.
\begin{itemize}
\item If both Alice and Bob are honest, their claims are always accepted (soundness).
\item If only one of Alice or Bob is honest, then the honest one will lose with a negligible probability ($\epsilon$-unmodifiable).
\item If both Alice and Bob are malicious (i.e., not honest), they gain nothing; i.e., our protocols are not responsible for this case.
\end{itemize}
Of these three cases, the first two needs to be addressed.
We define these cases rigorously as follows.

\begin{definition}[Security criteria of our ET protocols]
\label{def:security}
We say that an ET protocol is $\epsilon$-secure if it satisfies the following two conditions.
\begin{itemize}
\item {\bf Soundness}: If both Alice and Bob are honest and their data are the same ($m^A=m^B$), then the equality-testing phase always succeeds.
\item {\bf $\epsilon$-unmodifiability}:
Suppose that Alice is honest and the TTP announces ``success'' in the equality-testing phase.
Then in the subsequent dispute-resolution phase, the TTP announces an outcome unfavorable to Alice (namely, ``$m^{B*}$ is correct'' or ``undecidable'') with a probability $\le\epsilon$.

In addition, the same condition also holds with the roles of Alice and Bob being exchanged.
\end{itemize}
\end{definition}

This definition does not prohibit malicious players from intentionally disabling the equality-testing phase.
However, this does not mean that the security is compromised.
Indeed, there is no protocol from which one can expect more, as can be seen as follows:
Suppose, for example, that Alice is honest and Bob is malicious, and Bob performs the equality-testing phase correctly as specified, though by using a wrong data $m^{B*}(\ne m^A)$.
From the perspective of the TTP, this situation cannot be distinguished from the one in which the ``honest'' Bob has the ``correct'' data $m^{B*}$, and the ``malicious" Alice tries to claim a ``wrong'' data $m^A$.
In such situation, and if the equality-testing phase cannot fail and thus Bob can trigger the dispute-resolution, Alice will be judged malicious  with a significant probability of $1/2$, even though she is actually honest.

\subsection{Background to the definition above}
\label{sec:background_of_definition}

Next, we explain the background that led to the above settings and security criteria.

\subsubsection{We want to be able to handle huge data}
\label{sec:handle_huge_data}
Even when the data is huge (e.g., genome data ($r\ge 10^9$) or the entire data center ($r\ge 10^{15}$)), the resource comsumption (namely, communication volume, and secret key length) should be sufficiently smaller than $r$, so that the protocol can be executed efficienly in practice.

\subsubsection{Dispute-resolution phase as a deterrent}

It is natural that Alice and Bob disclose their data $m_A^*, m_B^*$ themselves in the dispute-resolution phase.
However, if this happens frequently, the average communication length per protocol execution will be enormous, contradicting the requirements of the preceding paragraph.

Therefore, we here assume that the actual frequency of dispute-resolution is sufficiently low, and thus the communication length for the dispute-resolution can be ignored when evaluating the performance of the protocol.
(On the other hand, the consumption of secret keys will always be taken into account).
Note that this evaluation criterion is the same as that of conventional non-repudiation MACs (e.g., \cite{10.1007/3-540-44448-3_11}), so it is by no means a weakness specific to our method.

In order to justify this evaluation criterion, we will focus on situations where ``though the dispute-resolution is not frequent, once it actually happens and the fraud is discovered, the damage will be enormous.''
In other words, we assume that dispute-resolution is a deterrent and that it will not be executed frequently.
For example, if the data is an official document or a will, any modification to it would immediately mean an illegal act, and if discovered, would inevitably result in legal punishment\footnote{Conversely, our protocols are not suitable for situations where a malicious player can easily escape before an dispute resolution occurs.}.

\subsubsection{Straightforward use of MAC does not work}
\label{sec:MAC_does_not_work}

Next, we will make comparisons with the existing methods.

For the sake of simplicity, we first ignore resources.
If Alice and Bob only needs to check the equality of their data, it suffices for them to exchange message authentication code (MAC) tags (Fig.\ref{fig:ET_by_DS}(a))\footnote{Bob consider $m^B$ in our scheme to be the message received from Alice, and verify its MAC. Alice also does the same.}.
In addition if they also wish to determine which of the two, Alice or Bob, actually modified the data (i.e., if they want $\epsilon$-unmodifiability), they can do so by using a MAC scheme equipped with a non-repudiation function (hereafter referred to as non-repudiation MACs).

However, non-repudiation MACs are either limited in usage or consume too much resource to be practical.
Indeed, as we require information theoretic security (ITS) here, we cannot use those schemes based on computational assumptions, such as the widely used digital signature schemes (see, e.g., Ref. \cite{katz2020introduction}, Chapter 13).
There are also non-repudiation MACs achieving ITS, such as $A^2$ code \cite{333869, 10.1007/3-540-48329-2_29,10.1007/3-540-39118-5_15, 10.1007/3-540-49264-X_25} and unconditionally secure digital signature (USDS) \cite{10.1007/3-540-44448-3_11,SwansonStinson+2016+35+67}, but they are subject to the following restrictions,
\begin{align}
({\rm communication\ of\ the\ equality}&\ {\rm testing\ phase}\ge)\nonumber\\
\ {\rm MAC\ tag\ length}&>{\rm data\ length}\ r,
\label{eq:communication_length_gt_data_length}\\
{\rm MAC\ key\ length}&>{\rm data\ length}\ r,
\label{eq:key_length_gt_data_length}
\end{align}
again contradicting the requirements of Section \ref{sec:handle_huge_data}.

Inequality (\ref{eq:communication_length_gt_data_length}) can readily be proved for non-repudiation MACs in general.
On the other hand, Inequality (\ref{eq:key_length_gt_data_length}) is an empirical relation which seems to hold for all the practical non-repudiation MAC schemes that we are aware of (see Refs. \cite{333869, 10.1007/3-540-48329-2_29, 10.1007/3-540-39118-5_15,10.1007/3-540-49264-X_25, 10.1007/3-540-44448-3_11, SwansonStinson+2016+35+67} and references therin).

\begin{figure}[htbp]
\begin{center}
 \includegraphics[width=0.8\linewidth]{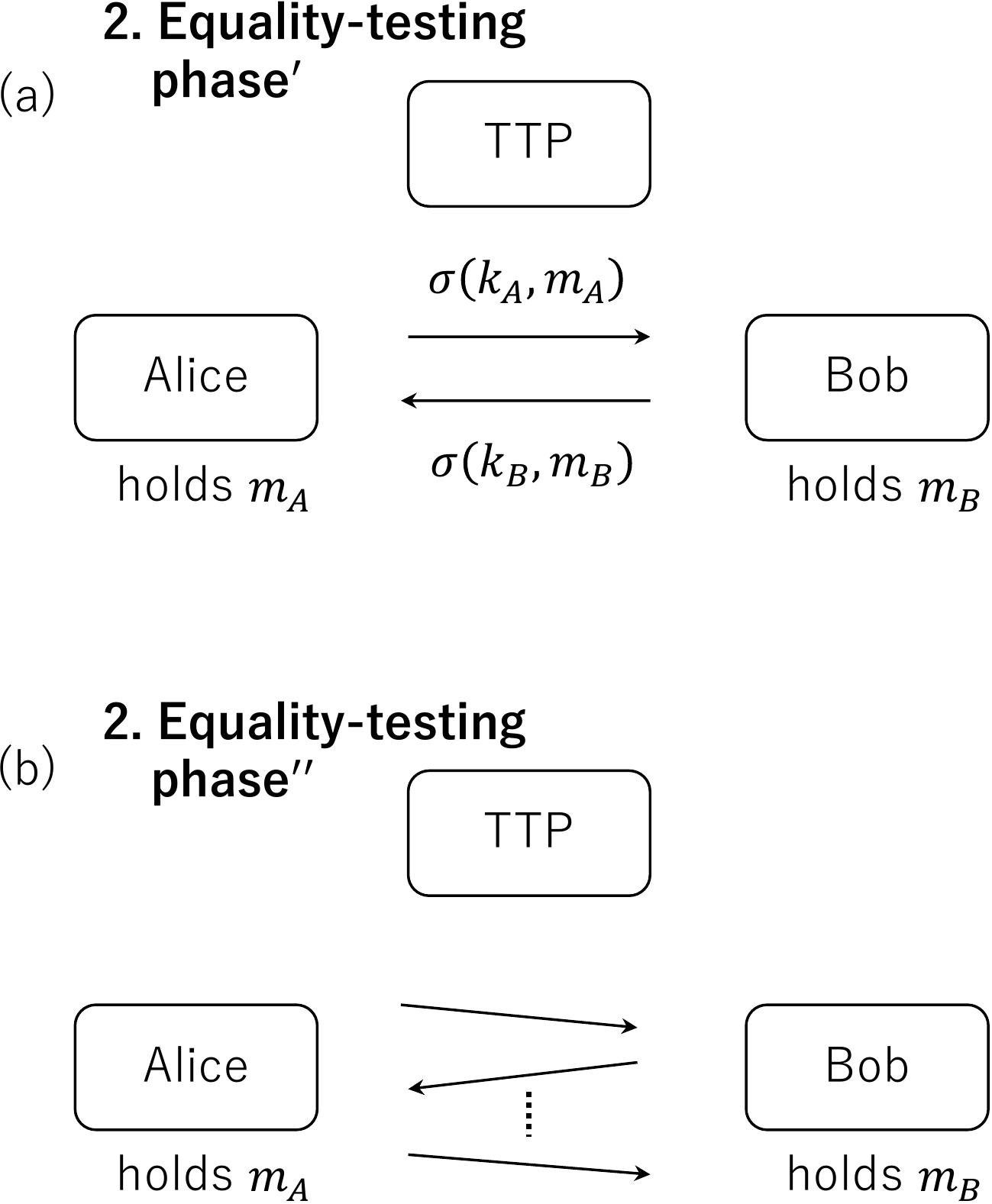}
 \caption{(a) The situation where Alice and Bob exchange MAC tags $\sigma(k^A,m^A), \sigma(k^B,m^B)$ to realize the equality-testing phase, without the help of TTP.
(b) More general situation where the TTP does not intervene in the equality-testing phase.
}
 \label{fig:ET_by_DS}
\end{center}
\end{figure}

\subsubsection{TTP's intervention is necessary in the equality-testing phase}
\label{sec:TTP_intervention}
Moreover, it can easily be shown that Inequality (\ref{eq:communication_length_gt_data_length}) in fact holds for any ET protocol in which the TTP does not intervene in the equality-testing phase (Fig.\ref{fig:ET_by_DS}(b)).
Therefore, in order to avoid restriction (\ref{eq:communication_length_gt_data_length}), we need to generalize the setting by letting the TTP intervene in the equality-testing.

We consider this to be a natural generalization:
All the existing non-repudiation MAC schemes allow the TTP's intervention in the key-distribution and the dispute-resolution phases, so there is no reason to prohibit it only in the equality-testing phase.

\subsection{Compatibility with quantum key distribution networks}
Furthermore, our ET protocols are particularly compatible with the quantum key distribution network (QKDN) \cite{itut_y3800, Sasaki:11}.
The reasons are as follows.

First, the generalized setting where the TTP intervenes in the equality-testing phase, described in the previous section, can easily be realized in QKDNs.
This is because a QKDN normally has a key management authority, which is a TTP that is always in operation, and thus suitable for the present purpose.

Second, as is always the case with information-theoretic cryptographic protocols in general, this protocol needs to be supplied with a new secret key each time .
In order to achieve this, at least at present, one must either use the so-called trusted courier (i.e., the TTP himself physically delivers a storage medium containing the key) or use quantum key distribution.

\subsection{Nontrivial part of the problem}
\label{sec:nontriviality}
Once one accepts that the TTP can intervene in the equality-testing phase, the simplest construction might seem to be the one in which both Alice and Bob send $m^A, m^B$ and/or the corresponding MAC tags to the TTP.
However, such constructions is again found impractical by a similar reasoning as in Section \ref{sec:MAC_does_not_work}:
If a MAC scheme without the non-repudiation function is used there, the protocol can indeed detect if there was a modification at all, but cannot determine which of the two, Alice or Bob, modified his/her data (no $\epsilon$-nonmodifiability).
Hence a non-repudiation MAC scheme with ITS is necessary, but that again makes the protocol subject to Inequalities (\ref{eq:communication_length_gt_data_length}) and (\ref{eq:key_length_gt_data_length}), contradicting the requirements of Section \ref{sec:handle_huge_data}.

Therefore, we need a construction different from the conventional non-repudiation MAC schemes.

\section{Efficient ET protocol with information-theoretic security}

With the above setting and security criteria, we propose a secure and efficient protocol that is free from restrictions (\ref{eq:communication_length_gt_data_length}) and (\ref{eq:key_length_gt_data_length}), and achieves (\ref{eq:parameter_order_ETP});
hence it can be used in the parameter region
\begin{equation}
{\rm communication\ length},\ {\rm key\ length}<<{\rm data\ length}\ r.
\end{equation}

Our protocol is specified in Table ``Protocol \ref{protocol_main}.''
\begin{Protocol}[htbp]
\caption{Efficient ET protocol with information-theoretic security}
\label{protocol_main}
\begin{algorithmic}

\STATE {\bf Key-distribution phase}:
The TTP distributes secret keys as follows.

\STEPONE Randomly select $n$ distinct numbers out of $\{1,\dots,N\}$, and denote them by $\Omega$ (i.e. $\Omega\subset_{\rm R}[N]$, with $[N]:=\{1,2,\cdots,N\}$).
\STEPTWO Randomly select ``equality-testing keys'' $k^A_{\rm et}=(k^A_1,k^A_2,\cdots,k^A_{N})$, $k^B_{\rm et}=(k^B_1,k^B_2,\cdots,k^B_{N})$ with $k^A_i,k^B_j\in\{0,1\}^l$,
such that $k^A_j=k^B_j$ for $j\in \Omega$.

(For example: First choose all $k^A_i$'s according to the uniform distribution, then let $k^B_j=k^A_j$ for $j\in \Omega$, and then choose the undecided elements of $k^B_j$'s according to the uniform distribution.)
\STEPTHREE Randomly select ``secure communication keys'' $k^A_{\rm sc}, k^B_{\rm sc}\in\{0,1\}^{l_{\rm sc}}$.
\STEPFOUR Send $k^A_{\rm et}, k^A_{\rm sc}$ to Alice, and $k^B_{\rm et}, k^B_{\rm sc}$ to Bob.
\end{algorithmic}
\begin{algorithmic}
\STATE
\STATE $\dagger$ All subsequent communications must be authenticated in an information theoretic manner; e.g., by using MAC scheme specified by Lemma \ref{lem:ITSMAC} and consuming part of secure communication keys $k^A_{\rm sc}, k^B_{\rm sc}$.

\STATE
\STATE {\bf Equality-testing phase}
\STEPONE
Alice calculates hash values $s^A_i:=f(k^A_i,m^A)$ ($i=1,\dots,N$) by using a hash function $f$ specified by Lemma \ref{lem:Hash_func}.
Then she encrypts them (by using the one-time pad consuming part of $k^A_{\rm sc}$) and sends it to the TTP.

Bob also does the same.
\STEPTWO
The TTP announces ``success'' if $s^A_j=s^B_j$ for $\forall j\in\Omega$ holds; otherwise she announces ``failure''.
\end{algorithmic}
\begin{algorithmic}
\STATE
\STATE {\bf Dispute-resolution phase}

\STEPONE 
Alice (Bob) sends data $m^{A*}$ ($m^{B*}$) to the TTP.

\STEPTWO
The TTP performs the following checks:

With $g^{\alpha\beta}:=\big|\{j\in U|f(k^\alpha_j,m^\beta)=s^\alpha_j\}\big|$ for $\alpha,\beta\in\{A,B\}$,
\begin{enumerate}
\item Announce ``Both is correct'' if $m^{A*}=m^{B*}$ and finish.
\item Announce ``$m^{A*}$ is correct'' if $g^{AA}=N\;\land\;(g^{BA} > g^{AB}\;\lor\;g^{BB}< N)$, and finish.

Also perform the same check with indices $A,B$ exchanged.
\item Announce ``undecidable.''
\end{enumerate}
\end{algorithmic}
\end{Protocol}
The function $f$ appearing there should satisfy the condition of the following lemma.
\begin{lemma}[Almost universal$_2$ hash function using polynomials, Ref. \cite{10.1007/3-540-55719-9_77}, or Ref. \cite{katz2020introduction}, Theorem 4.17]
\label{lem:Hash_func}
There exists a function $f:(k,m)\mapsto s$ with
$k\in\{0,1\}^l, m\in\{0,1\}^r, s\in\{0,1\}^l$ satisfying the following:
With variable $K\in\{0,1\}^l$ being uniformly distributed, and for arbitrary distinct pair of data $m\neq m'\in\{0,1\}^r$, we have
\begin{align}
\Pr\left[f(K,m)= f(K,m')\right]&\leq \left\lceil\frac r{  l}-1\right\rceil2^{-l}.
\end{align}
\end{lemma}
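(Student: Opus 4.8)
The plan is to realize $f$ as a polynomial-evaluation hash over the finite field $\mathbb{F}_{2^l}$. First I would fix a bijection between the bit strings $\{0,1\}^l$ and the field $\mathbb{F}_{2^l}$ (which has exactly $2^l$ elements), so that both a key $k$ and a hash value $s$ are viewed as field elements and field addition corresponds to bitwise XOR.

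Next I would parse the message. Setting $d:=\lceil r/l\rceil$, I would split $m\in\{0,1\}^r$ into $d$ consecutive blocks of $l$ bits (padding the final block with zeros when $r$ is not a multiple of $l$), and interpret each block $m_i$ as an element of $\mathbb{F}_{2^l}$. I would then define the hash as the polynomial evaluation
\[
f(k,m):=\sum_{i=1}^{d} m_i\,k^{\,d-i},
\]
that is, the value at the point $k$ of the degree-$(d-1)$ polynomial whose coefficients are the message blocks. This is well defined and takes values in $\{0,1\}^l$, as required.

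The core of the argument is a root-counting bound. For distinct $m\neq m'$ the (padded) block tuples differ, so the difference
\[
q(x):=f(x,m)-f(x,m')=\sum_{i=1}^{d}(m_i-m_i')\,x^{\,d-i}
\]
is a nonzero polynomial over $\mathbb{F}_{2^l}$ of degree at most $d-1$. Since a nonzero univariate polynomial of degree at most $d-1$ over a field has at most $d-1$ roots, at most $d-1$ of the $2^l$ possible values of the uniform key $K$ satisfy $q(K)=0$. Therefore
\[
\Pr\left[f(K,m)=f(K,m')\right]=\Pr\left[q(K)=0\right]\le\frac{d-1}{2^l}=\bigl(\lceil r/l\rceil-1\bigr)\,2^{-l},
\]
which equals $\lceil r/l-1\rceil\,2^{-l}$ by the elementary identity $\lceil x\rceil-1=\lceil x-1\rceil$, giving the claimed bound.

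The step I expect to require the most care is verifying that $q$ is genuinely nonzero with its degree correctly bounded by $d-1$: one must check that distinct length-$r$ messages yield distinct coefficient tuples $(m_1,\dots,m_d)$ after the deterministic padding, and track the off-by-one in the ceiling so that the sharper bound $\lceil r/l-1\rceil\,2^{-l}$ — rather than the naive $\lceil r/l\rceil\,2^{-l}$ — is obtained. Everything else, namely the field bijection and the root count, is standard.
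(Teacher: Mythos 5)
Your proposal is correct and is essentially the same argument as in the sources the paper cites for this lemma (the polynomial-evaluation hash over $\mathbb{F}_{2^l}$ with the root-counting bound); the paper itself gives no proof and simply quotes this standard construction. Your handling of the two delicate points --- injectivity of the zero-padded block encoding for fixed-length messages $m\neq m'\in\{0,1\}^r$, and the identity $\lceil r/l\rceil-1=\lceil r/l-1\rceil$ that yields the stated bound --- is accurate, so nothing is missing.
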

Also, all the communications in the equality-testing and dispute-resolution phases should be authenticated in an information theoretic manner; e.g., by using the following MAC scheme and consuming part of pre-distributed keys.
\begin{lemma}[Information-theoretically secure message authentication code (ITS-MAC). See, e.g, Ref. \cite{katz2020introduction}, Theorems 4.17 and 4.25]
\label{lem:ITSMAC}
There exists an ITS-MAC scheme satisfying the following:
The MAC tag $t\in\{0,1\}^{n+l}$ is generated using a function $g:(k,m)\mapsto t$ from a message $m\in\{0,1\}^{r'}$ with the uniformly distributed key $k\in\{0,1\}^{2(n+l)}$, and it achieves $\left\lceil\frac{r'}{n+l}-1\right\rceil2^{-(n+l)}$-security (i.e., the adversary can forge the MAC tag $t$ only with a probability $\le\left\lceil\frac{r'}{n+l}-1\right\rceil2^{-(n+l)}$).
\end{lemma}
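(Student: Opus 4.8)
The plan is to obtain the MAC from the almost-universal hash of Lemma \ref{lem:Hash_func} by the standard ``hash-then-mask'' paradigm. Write $L:=n+l$ and split the key $k\in\{0,1\}^{2(n+l)}$ as $k=(k_1,k_2)$ with $k_1,k_2\in\{0,1\}^{L}$. Invoke Lemma \ref{lem:Hash_func} with its parameters instantiated as $l\mapsto L$ and $r\mapsto r'$, obtaining a hash $f:\{0,1\}^{L}\times\{0,1\}^{r'}\to\{0,1\}^{L}$ whose collision probability is at most $\lceil r'/L-1\rceil 2^{-L}$, and define the tag by $g(k,m):=f(k_1,m)\oplus k_2$. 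By construction the tag lies in $\{0,1\}^{n+l}$ and the key in $\{0,1\}^{2(n+l)}$, so the stated lengths are immediate and it remains only to establish the security bound.

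For the security argument I would reason as follows. The adversary is given a single valid pair $(m,t)$ with $t=f(k_1,m)\oplus k_2$ and must output a fresh pair $(m',t')$ that verifies, i.e.\ $t'=f(k_1,m')\oplus k_2$; note a forgery necessarily has $m'\ne m$, since for a fixed key the tag of a message is unique. The first step is to observe that the mask $k_2$ is uniform and independent of $k_1$, so $t$ is uniformly distributed and leaks nothing about $k_1$: conditioned on the adversary's entire view, $k_1$ remains uniform. The second step is to eliminate $k_2$ by taking the XOR of the two tag equations, which shows that the forgery succeeds exactly when
\[
f(k_1,m)\oplus f(k_1,m')=t\oplus t',
\]
that is, when the random quantity on the left equals the adversary-chosen target $\Delta:=t\oplus t'$ on the right.

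The remaining step, and the one I expect to be the main obstacle, is to bound $\Pr_{k_1}[\,f(k_1,m)\oplus f(k_1,m')=\Delta\,]$ for an \emph{arbitrary} $\Delta$. Lemma \ref{lem:Hash_func} as stated only controls the special case $\Delta=0$ (the collision probability), whereas a malicious choice of $t'$ makes $\Delta$ nonzero in general; thus the plain almost-universal property is not quite enough, and one needs the stronger almost-XOR-universal (difference-universal) property. I would resolve this by appealing to the concrete polynomial-evaluation construction underlying Lemma \ref{lem:Hash_func}: over $\mathrm{GF}(2^{L})$ the map $m\mapsto f(k_1,m)$ is the evaluation at $k_1$ of a polynomial determined by $m$, so $f(k_1,m)\oplus f(k_1,m')\oplus\Delta$ is the evaluation at $k_1$ of a \emph{nonzero} polynomial of degree at most $\lceil r'/L\rceil-1=\lceil r'/L-1\rceil$. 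The number of its roots, hence the number of ``bad'' keys $k_1$, is at most $\lceil r'/L-1\rceil$ out of $2^{L}$, giving the same bound $\lceil r'/L-1\rceil 2^{-L}$ for every $\Delta$. Combining this with the independence established above yields forgery probability $\le\lceil r'/(n+l)-1\rceil 2^{-(n+l)}$, which is exactly the claimed $\left\lceil\frac{r'}{n+l}-1\right\rceil2^{-(n+l)}$-security. Equivalently, one may simply cite that this upgrade from almost-universality to a secure one-time MAC is already packaged in Katz--Lindell, Theorem 4.25.
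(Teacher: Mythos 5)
Your overall route --- hash-then-mask, with forgery reduced to the XOR-universal (difference-universal) property of polynomial evaluation --- is the standard textbook construction, and it is essentially what the paper relies on: the paper gives no proof of this lemma at all, but simply imports it from Katz--Lindell (Theorems 4.17 and 4.25). You also correctly identified the crux, namely that plain almost-universality does not survive the masking argument. However, your resolution of that crux contains a genuine error. Write $L=n+l$ and $d=\lceil r'/L\rceil$. The bound of Lemma \ref{lem:Hash_func}, $(d-1)2^{-L}=\lceil r'/L-1\rceil 2^{-L}$, is achieved by the variant $f(k,m)=\sum_{i=1}^{d}m_i k^{i-1}$ (degree $d-1$, \emph{with} a constant term); indeed the constant-free variant $f(k,m)=\sum_{i=1}^{d}m_i k^{i}$ has $f(0,m)=0$ for all $m$, hence collision probability at least $2^{-L}$ for every pair and AU parameter $d\,2^{-L}$, so it does not meet Lemma \ref{lem:Hash_func}'s bound. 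But for the constant-term variant your key claim --- that $f(k_1,m)\oplus f(k_1,m')\oplus\Delta$ is a \emph{nonzero} polynomial of degree at most $d-1$ --- fails outright: take $m'$ differing from $m$ only in the block multiplying $k^0$ and set $\Delta=m_1\oplus m_1'$; then $f(k_1,m)\oplus f(k_1,m')=\Delta$ for \emph{every} key, and the forgery $(m',\,t\oplus\Delta)$ verifies with probability $1$. So the hash meeting Lemma \ref{lem:Hash_func}'s bound is almost-universal but provably not XOR-universal, and hash-then-mask with it is insecure.

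The repair is to use the constant-free variant, for which $f(k_1,m)\oplus f(k_1,m')\oplus\Delta$ is indeed nonzero for every $\Delta$ and every $m\neq m'$ --- but its degree is $d=\lceil r'/L\rceil$, not $d-1$, so the argument yields forgery probability at most $\lceil r'/L\rceil 2^{-L}$, one unit worse in the numerator than the lemma's stated $\lceil r'/L-1\rceil 2^{-L}$ (these genuinely differ, since $\lceil x-1\rceil=\lceil x\rceil-1$). In other words, the two properties you need simultaneously --- degree $d-1$ \emph{and} nonzero difference for arbitrary $\Delta$ --- cannot coexist in the polynomial construction, which is exactly the conflation in your ``main obstacle'' step. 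The off-by-one is immaterial for the paper's downstream use (Theorem \ref{the:main} budgets $\epsilon_1=\epsilon/16$ per round, leaving ample slack), but as written your proof establishes a slightly weaker constant than the one stated; the clean fix is either to prove the lemma with the bound $\lceil r'/L\rceil 2^{-L}$ using the constant-free hash, or to fall back, as the paper does, on citing the Katz--Lindell theorem directly and checking which constant it actually provides.
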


Then this protocol satisfies the following security.
\begin{theorem}
\label{the:main}
For the data length $r\ge 256$, and for the security parameter $\epsilon\le 2^{-4}$,
Protocol \ref{protocol_main} is $\epsilon$-secure, if we choose its parameters as
\begin{eqnarray}
n&=&\lceil3\log_2(16/\epsilon)\rceil,\\
l&=&\lceil\log_2 r\rceil,\\
N&=&2n,\\
l_{\rm sc}&=&4nl+16(n+l).
\end{eqnarray}
In this case, the total length of secret keys ($k^A_{\rm et}, k^A_{\rm sc}, k^B_{\rm et}, k^B_{\rm sc}$) is $8(nl+2n+2l)$ bits.
The total communication length of the equality-testing phase achieves $4(nl+2n+2l+1)$ bits, and that of the dispute-resolution phase is $2(r+4n+4l+2)$ bits.
\end{theorem}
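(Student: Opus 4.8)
The plan is to verify the two clauses of Definition \ref{def:security} separately, dispatching soundness quickly and spending essentially all the effort on $\epsilon$-unmodifiability. \emph{Soundness} is immediate: if Alice and Bob are honest with $m^A=m^B=:m$, then $s^A_i=f(k^A_i,m)$ and $s^B_i=f(k^B_i,m)$, and since the key-distribution phase guarantees $k^A_j=k^B_j$ for every $j\in\Omega$, we get $s^A_j=s^B_j$ on $\Omega$, so the test in Step 2 of the equality phase succeeds with certainty. For \emph{$\epsilon$-unmodifiability} I take Alice honest (the other case is symmetric) and bound the probability of the single damaging event $\mathcal B$: the TTP announces ``success'' and then, in the dispute phase, returns a verdict unfavorable to Alice. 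Being honest, Alice submits $m^{A*}=m^A$ together with genuine hashes, so $g^{AA}=N$ automatically, and Rule~1 settles the case $m^{A*}=m^{B*}$ in her favor; inspecting Rules~2--3 then shows that, conditioned on $m^{B*}\neq m^A$, the outcome is unfavorable \emph{precisely} when $g^{BB}=N$ and $g^{BA}\le g^{AB}$.

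The first identity is the lever. The constraint $g^{BB}=N$ forces Bob's transmitted hashes to satisfy $s^B_j=f(k^B_j,m^{B*})$ for all $j$, so writing $C:=\{j\in[N]:f(k^B_j,m^{B*})=f(k^B_j,m^A)\}$, and using that with Alice honest a ``success'' means $s^B_j=f(k^B_j,m^A)$ on $\Omega$, one sees that success is equivalent to $\Omega\subseteq C$ and that $g^{BA}=|C|$. The probabilistic core is then to exploit that $\Omega$ is hidden from Bob: the one-time-pad encryption of the hashes makes Alice's transcript independent of $\Omega$ in Bob's view, so his only $\Omega$-dependent information is the single success bit, and for each $j\notin\Omega$ the key $k^A_j$ is fresh and independent of his view. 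Hence, by the almost-universal property of Lemma~\ref{lem:Hash_func}, an ``Alice-side collision'' $f(k^A_j,m^{B*})=f(k^A_j,m^A)$ occurs with probability at most $p:=\lceil r/l-1\rceil 2^{-l}\le 1/l\le 1/8$ (the last step using $l=\lceil\log_2 r\rceil\ge 8$ for $r\ge256$). Because the $\Omega$-indices already contribute $|\Omega|=n$ collisions to both $g^{AB}$ and $g^{BA}$, the condition $g^{BA}\le g^{AB}$ reduces to $|C|-n\le X$, where $X$ counts Alice-side collisions among the $n$ indices outside $\Omega$ and obeys $X\preceq\mathrm{Bin}(n,p)$.

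Combining the hidden-$\Omega$ mechanism with the independence of the fresh keys, and using $N=2n$, any Bob strategy (equivalently, any fixed $m^{B*}$, hence any $C$) yields
\[
\Pr[\mathcal B]\le \frac{\binom{|C|}{n}}{\binom{2n}{n}}\,\Pr\!\big[\mathrm{Bin}(n,p)\ge |C|-n\big].
\]
Here lies the conceptual crux, and the step I expect to be the main obstacle: the two factors pull against each other, since enlarging $C$ to make ``success'' likely simultaneously inflates $g^{BA}=|C|$ and thereby makes $g^{BA}\le g^{AB}$ hard to achieve, so neither a small nor a large $C$ helps the cheater. I would make this quantitative by maximizing (equivalently, summing) the right-hand side over $|C|\in\{n,\dots,2n\}$. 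Using $\binom{2n}{n}\ge 4^{n}/(2\sqrt n)$, the union bound $\Pr[\mathrm{Bin}(n,p)\ge s]\le\binom{n}{s}p^{s}$, and $\binom{n+s}{n}\binom{n}{s}\le (2n^{2})^{s}/(s!)^{2}$, the sum collapses to a modified-Bessel estimate bounded by $2\sqrt n\,(e/4)^{n}$ once $p\le 1/8$ is invoked; substituting $n=\lceil 3\log_2(16/\epsilon)\rceil$ makes this at most $\epsilon/2$ throughout the regime $r\ge256$, $\epsilon\le 2^{-4}$, with room to spare.

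Two loose ends remain. First, the channels are unauthenticated, so I must add the event that the adversary forges one of the authenticating tags and thereby tampers with Alice's messages or the TTP's announcement; by Lemma~\ref{lem:ITSMAC}, with tag length $n+l$ and the key budget $16(n+l)$ set aside inside $l_{\rm sc}$, this contributes at most $\epsilon/2$, giving $\Pr[\mathcal B]\le\epsilon$ in total. Second, the stated key, equality-phase, and dispute-phase sizes are obtained by direct bookkeeping over Protocol~\ref{protocol_main}: one sums the lengths of the distributed keys (the blocks $k^A_{\rm et},k^B_{\rm et}$ of $l$ bits each, the shared $\Omega$-blocks, and $k^A_{\rm sc},k^B_{\rm sc}$) and of the transmitted data (the $N$ encrypted hashes per player in the equality phase, the $r$-bit claims $m^{A*},m^{B*}$ in the dispute phase), in each case adding the one-time-pad and MAC-tag overheads; I expect these counts to yield the quoted expressions after simplification.
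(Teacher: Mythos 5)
Your proposal is correct, and its core coincides with the paper's own argument: the reduction to a Bob who fixes $m^{B*}$ in advance (justified, as you do, by noting that the OTP-encrypted transcript leaves only the one-bit ``success'' flag as $\Omega$-dependent information), the characterization of the unfavorable event as $\Omega\subseteq C$ together with $g^{BB}=N$ and $g^{BA}\le g^{AB}$, and your product bound $\binom{|C|}{n}\binom{2n}{n}^{-1}\Pr\left[\mathrm{Bin}(n,q)\ge |C|-n\right]$ is exactly the paper's $\epsilon_2=\max_t p_{\rm et}(t)\,p_{\rm dr}(t)$ from Eqs.~(\ref{eq:p_et})--(\ref{eq:p_arb}) with $t=|C|$, including the observation that the $n$ indices of $\Omega$ contribute collisions to both sides and that the keys $k^A_j$, $j\notin\Omega$, are fresh. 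Where you genuinely diverge is the final estimation: the paper splits at $t=3n/2$, bounding $p_{\rm et}(t)\le(t/2n)^n$ below the threshold and invoking the large-deviation bound $p_{\rm dr}(t)\le(2n-t+1)2^{-nD(p\|q)}$ (Cover--Thomas, Theorem 11.1.4) above it, then sets $\epsilon_1=\epsilon/16$ so that the ideal-channel failure plus the $8\epsilon_1$ authentication loss over eight rounds stays within $\epsilon$; you instead sum over $s=|C|-n$ using $\binom{2n}{n}\ge 4^n/(2\sqrt n)$, $\Pr[\mathrm{Bin}(n,q)\ge s]\le\binom{n}{s}q^s$, and $\binom{n+s}{n}\binom{n}{s}\le(2n^2)^s/(s!)^2$, collapsing the sum to the Bessel-type estimate $2\sqrt n\,(e/4)^n$, with an $\epsilon/2+\epsilon/2$ budget against MAC forgeries. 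Your estimate checks out in the stated regime: $q\le 1/8$ for $r\ge 256$, and $2\sqrt n\,(e/4)^n\le\epsilon/2$ since $(e/4)^n\le 2^{-1.67\log_2(16/\epsilon)}$ and $n\ge 24$; it is more elementary (no KL-divergence machinery, no case split) at the cost of a lossier sum-versus-max step, which the wide margin absorbs --- note the max is actually the right object, since $|C|$ is determined by Bob's choice of $m^{B*}$ and his keys, so summing only overestimates. The bookkeeping you defer is indeed the routine count the paper performs ($4nl$ OTP bits for the encrypted hashes, $2(n+l)$ MAC-key bits per round over eight rounds giving $16(n+l)$, each round $\epsilon/16$-authenticated since $r'\le\max\{r,2nl\}$), so nothing essential is missing.
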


Hence it indeed satisfies condition (\ref{eq:parameter_order_ETP}), anticipated in Introduction.
For example, even when the data pair is 1-Pbit long ($r=2^{50}\ge10^{15}$), we can achieve $10^{-12}$-security, with the communication length in the equality-testing phase $\le64$kbit and the total secret key length $\le32$kbit.

Note that, if we let $N=2, n=1$ in Protocol~\ref{protocol_main} and also let $\Omega$ fixed at $\{2\}$, our protocol becomes an example of the straightforward construction mentioned at the beginning of Section \ref{sec:nontriviality}.
However, this example uses a MAC scheme without the non-repudiation function, and thus cannot achieve $\epsilon$-nonmodifiability, as already discussed in the same section.
In the present method, instead of using a non-repudiation MAC, we let $N$ and $n$ be sufficiently large, and choose $\Omega$ randomly.
This improves the TTP's ability to detect malicious actions by Alice or Bob during the equality-testing phase.

\section{Proof of Theorem \ref{the:main}}
Of the two conditions presented in Definition \ref{def:security}, the soundness is evident, so we prove $\epsilon$-unmodifiability only.
We begin by proving the following idealized case.
\begin{lemma}
\label{lmm:Thm_with_ideal_channels}
Suppose that Alice, Bob and the TTP can use ideal secure channels whenever necessary in the equality-testing and the dispute-resolution phases\footnote{In other words, they can use channels which are completely free of tampering in the equality-testing and the dispute-resolution phases, and in addition, those channels used in Step 1 of the equality-testing phase are free of eavesdropping.}
(thus they do not need to consume the secure communication keys $k^A_{\rm sc}, k^B_{\rm sc}$).

Also suppose that $r\ge2^8$ and $\epsilon_1\le2^{-8}$.
Then Protocol \ref{protocol_main} is $\epsilon_1$-unmodifiable with its parameters chosen as $n=\lceil3\log_2(1/\epsilon_1)\rceil$, $N=2n$, $l=\lceil\log_2 r\rceil$. 
\end{lemma}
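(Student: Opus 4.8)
The plan is to prove $\epsilon_1$-unmodifiability directly in the case ``Alice honest, Bob malicious''; the reverse case follows from the symmetry of the protocol and of the decision rule, once one checks that the two branches of step~2 can never fire simultaneously (they would require both $g^{BA}>g^{AB}$ and $g^{AB}>g^{BA}$). First I would record the single fact that drives everything: since the honest Alice sends $s^A_i=f(k^A_i,m^A)$ and later claims $m^{A*}=m^A$, we always have $g^{AA}=N$. Substituting $g^{AA}=N$ into the TTP's verdicts collapses the case analysis: when $m^{B*}=m^A$ the outcome is the favorable ``both correct'', and when $m^{B*}\neq m^A$ a short computation shows that an outcome is \emph{unfavorable} to Alice (``$m^{B*}$ correct'' or ``undecidable'') if and only if $g^{BB}=N$ and $g^{AB}\geq g^{BA}$. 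In particular Bob must set $s^B_j=f(k^B_j,m^{B*})$ for every $j$ (otherwise $g^{BB}<N$ and the outcome is favorable), so his only genuine freedom is the choice of $m^{B*}$.

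Next I would translate the two surviving quantities into collision events. For each $j$ let $X_j=\mathbf 1[f(k^A_j,m^A)=f(k^A_j,m^{B*})]$ and $Y_j=\mathbf 1[f(k^B_j,m^A)=f(k^B_j,m^{B*})]$, so that $g^{AB}=\sum_j X_j$ and $g^{BA}=\sum_j Y_j$. Because $k^A_j=k^B_j$ precisely on $\Omega$, we have $X_j=Y_j$ for $j\in\Omega$, whereas for $j\notin\Omega$ the two keys, hence $X_j,Y_j$, are independent; by Lemma~\ref{lem:Hash_func} every such collision has probability at most $p:=\lceil r/l-1\rceil 2^{-l}\le 1/l\le 1/8$, using $l=\lceil\log_2 r\rceil\ge 8$ from $r\ge 2^8$. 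Two observations then organize the estimate: (i) once $g^{BB}=N$, the equality test succeeds iff $s^A_j=s^B_j$ for all $j\in\Omega$, which is exactly $\Omega\subseteq T$ with $T:=\{j:Y_j=1\}$; and (ii) the $\Omega$-terms cancel, giving $g^{AB}-g^{BA}=\sum_{j\notin\Omega}(X_j-Y_j)$, so that \emph{given} success the unfavorable condition $g^{AB}\ge g^{BA}$ reads $\sum_{j\notin\Omega}X_j\ge t-n$, where $t:=|T|$ and I used $\sum_{j\notin\Omega}Y_j=|T\setminus\Omega|=t-n$.

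The crucial structural point is that $\Omega$ is secret: from Bob's view it is a uniform random $n$-subset independent of $T$ (here I use that Alice's hash values are not eavesdropped, which the idealized channels grant; in the full protocol the one-time pad provides it, so the $X_j$ for $j\notin\Omega$ stay at probability $\le p$ and independent of Bob's view). Fixing Bob's view and averaging over $\Omega$, the probability of ``test succeeds and dispute unfavorable'' becomes $Q(s):=\binom{n+s}{n}\binom{2n}{n}^{-1}\Pr[\mathrm{Bin}(n,p)\ge s]$, where $s:=t-n\in\{0,\dots,n\}$: the factor $\binom{t}{n}/\binom{2n}{n}$ is $\Pr[\Omega\subseteq T]$ and the binomial tail comes from the $n$ independent Bernoulli variables $\{X_j\}_{j\notin\Omega}$. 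Since Bob commits to one $m^{B*}$, his overall success probability is at most $\max_s Q(s)$. I would also note here that the correct quantity is this \emph{joint} probability, not the conditional one given success: the $s=0$ strategy makes the conditional probability equal to $1$, so the security notion can only be the joint event.

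The hard part is bounding $\max_s Q(s)$ tightly enough to beat $\epsilon_1$. I would use the exact telescoping identity $\binom{n+s}{n}/\binom{2n}{n}=\prod_{i=s+1}^{n} i/(n+i)\le 2^{-(n-s)}$ together with the tail bound $\Pr[\mathrm{Bin}(n,p)\ge s]\le\binom{n}{s}p^s\le\binom{n}{s}8^{-s}$, which combine to $Q(s)\le 2^{-n}\binom{n}{s}4^{-s}$. Summing over $s$ gives $\max_s Q(s)\le\sum_s Q(s)\le 2^{-n}(1+\tfrac14)^n=(5/8)^n$, and since $5/8<2^{-1/3}$ and $n\ge 3\log_2(1/\epsilon_1)$ we obtain $(5/8)^n\le 2^{-n/3}\le\epsilon_1$, as required. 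I expect the two delicate points to be (a) arguing that the public ``success'' announcement gives Bob nothing exploitable about $\Omega$ beyond what conditioning on his committed view already encodes, and (b) the clean verification that ``unfavorable'' is exactly the complement of ``$m^{A*}$ correct'' once $g^{AA}=N$, so that the combinatorial bound above actually controls the entire failure probability.
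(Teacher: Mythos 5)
Your proposal is correct, and its core --- the reduction to a committed $m^{B*}$, the characterization of the unfavorable event as ``$g^{BB}=N$ and $g^{AB}\ge g^{BA}$'' given $g^{AA}=N$, and the factorization of Bob's success probability into $\Pr[\Omega\subseteq T]=\binom{t}{n}\binom{2n}{n}^{-1}$ times a binomial collision tail over the $n$ indices outside $\Omega$ --- is exactly the paper's decomposition (their $p_{\rm et}(t)\,p_{\rm dr}(t)$ with $t=n+s$, maximized over $t$). Where you genuinely diverge is the closing numerical estimate. The paper splits into cases: for $t\le 3n/2$ it bounds $p_{\rm et}(t)\le(t/2n)^n\le(3/4)^n\le\epsilon_1$, and for $t>3n/2$ it invokes the large-deviations bound $p_{\rm dr}(t)\le(2n-t+1)2^{-nD(p\|q)}$ (Cover--Thomas, Theorem 11.1.4) with $p=t/n-1\ge 1/2$, $q\le 1/8$, which is where the hypothesis $\epsilon_1\le 2^{-8}$ (forcing $n\ge 24$) is actually used to absorb the polynomial prefactor $(n/2+1)$. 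You instead prove the exact telescoping identity $\binom{n+s}{n}/\binom{2n}{n}=\prod_{i=s+1}^{n}\frac{i}{n+i}\le 2^{-(n-s)}$, use the elementary union-bound tail $\Pr[\mathrm{Bin}(n,q)\ge s]\le\binom{n}{s}q^s\le\binom{n}{s}8^{-s}$ (your step $q\le 1/l\le 1/8$ from $r\ge 2^8$ is sound), and sum via the binomial theorem to get the uniform bound $\sum_s Q(s)\le 2^{-n}(5/4)^n=(5/8)^n\le 2^{-n/3}\le\epsilon_1$. This buys you a self-contained argument with no case split and no KL-divergence machinery, and --- notably --- it never uses $\epsilon_1\le 2^{-8}$, so your route proves the lemma under slightly weaker hypotheses; the paper's route, by contrast, is the standard Chernoff/Sanov template and would scale more flexibly if one changed the ratio $N/n$ or needed sharper constants. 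Your two flagged ``delicate points'' are handled the same way the paper handles them: it likewise argues Bob may fix $m^{B*}$ in advance because he learns only the binary success/failure outcome and acts only on ``success,'' and it likewise bounds the joint probability of (success $\land$ unfavorable verdict), which, as your $s=0$ example shows, is the only sensible reading of $\epsilon$-unmodifiability.
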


\subsection{Proof of Lemma \ref{lmm:Thm_with_ideal_channels}}
We will prove this lemma in two steps.

\begin{lemma}
Under the same setting as in Lemma \ref{lmm:Thm_with_ideal_channels}, Protocol \ref{protocol_main} is $\epsilon_2$-unmodifiable, where
\begin{align}
\epsilon_2
&:=\max_{t\in\{n,\dots, N\}}\frac{(N-n)!t!}{N!(t-n)!}
\sum_{u=t-n}^{N-n}{N-n \choose u}q^{u}(1-q)^{N-n-u},\\
q&:=\left\lceil\frac r{  l}-1\right\rceil2^{-l}.
\end{align}
\end{lemma}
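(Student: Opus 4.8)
The plan is to reduce $\epsilon_2$-unmodifiability to a concrete estimate over the key randomness and then bound that estimate by the stated maximum. By the symmetry of the protocol in Alice and Bob it suffices to treat the case in which Alice is honest, so that $s^A_j=f(k^A_j,m^A)$, $m^{A*}=m^A$, and hence $g^{AA}=N$ with certainty. First I would pin down the event ``the equality-testing phase succeeds and the dispute outcome is unfavorable to Alice.'' Because $g^{AA}=N$, the TTP announces ``$m^{A*}$ is correct'' unless $g^{BB}=N$ and $g^{BA}\le g^{AB}$; and it announces ``both are correct'' only when $m^{B*}=m^A$, which is favorable. Thus an unfavorable outcome forces $m^{B*}\neq m^A$, $g^{BB}=N$, and $g^{BA}\le g^{AB}$. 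The requirement $g^{BB}=N$ in turn forces Bob to have sent $s^B_j=f(k^B_j,m^{B*})$ for every $j$, so his entire behavior is determined up to the choice of $m^{B*}$.

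Next I would rewrite everything through collision indicators. Put $X_j:=\mathbf{1}[f(k^B_j,m^{B*})=f(k^B_j,m^A)]$, $Y_j:=\mathbf{1}[f(k^A_j,m^{B*})=f(k^A_j,m^A)]$, and $S:=\{j:X_j=1\}$, $T:=\{j:Y_j=1\}$. One checks directly that $g^{BA}=|S|$, $g^{AB}=|T|$, and that the equality test succeeds precisely when $\Omega\subseteq S$. By Lemma~\ref{lem:Hash_func} each indicator equals $1$ with the same probability $p:=\Pr_k[f(k,m^A)=f(k,m^{B*})]\le q$ (note $m^A\neq m^{B*}$), and the key-distribution rule supplies the decisive dependence structure: $X_j=Y_j$ for $j\in\Omega$, while for $j\notin\Omega$ the keys $k^A_j,k^B_j$ are independent, so $X_j$ and $Y_j$ are independent. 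Since success makes $Y_j=1$ throughout $\Omega$, on the bad event $|S|=n+\sum_{j\notin\Omega}X_j$ and $|T|=n+\sum_{j\notin\Omega}Y_j$, whereby $g^{BA}\le g^{AB}$ reads $\sum_{j\notin\Omega}X_j\le\sum_{j\notin\Omega}Y_j$.

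Conditioning on $\Omega$ (the bad probability is clearly the same for every $n$-subset), the three events ``$Y_j=1$ for all $j\in\Omega$,'' ``$\sum_{j\notin\Omega}X_j=t-n$,'' and ``$\sum_{j\notin\Omega}Y_j\ge t-n$'' depend on disjoint, independent blocks of keys, the latter two being governed by $\mathrm{Bin}(N-n,p)$. Summing over $t=|S|$ and using $\binom{N}{t}\binom{t}{n}=\binom{N}{n}\binom{N-n}{t-n}$, I expect the exact identity
\[
\Pr[\text{bad}]=\sum_{t=n}^{N}\frac{\binom{t}{n}}{\binom{N}{n}}\,\Pr[\mathrm{Bin}(N,p)=t]\,\Pr[\mathrm{Bin}(N-n,p)\ge t-n],
\]
which I would cross-check by noting that, dropping the success constraint, it collapses to $p^{n}\Pr[W'\le W]$ for independent $W,W'\sim\mathrm{Bin}(N-n,p)$.

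Finally I would bound the sum. Since $p\le q$, monotonicity of the binomial upper tail gives $\Pr[\mathrm{Bin}(N-n,p)\ge t-n]\le\Pr[\mathrm{Bin}(N-n,q)\ge t-n]$; pulling out $\max_{t}\frac{\binom{t}{n}}{\binom{N}{n}}\Pr[\mathrm{Bin}(N-n,q)\ge t-n]$ and using $\sum_{t}\Pr[\mathrm{Bin}(N,p)=t]\le1$ gives $\Pr[\text{bad}]\le\epsilon_2$ once $\binom{t}{n}/\binom{N}{n}$ is rewritten as $(N-n)!\,t!/(N!\,(t-n)!)$ and the tail as the displayed $q$-binomial sum. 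I expect the real work to be in the first two steps, not this calculation: one must argue rigorously that an unfavorable outcome admits only the single attack $s^B_j=f(k^B_j,m^{B*})$, and must get the on-/off-$\Omega$ key dependence exactly right, since it is the sharing $X_j=Y_j$ on $\Omega$ against independence off it that both identifies success with $\Omega\subseteq S$ and splits the estimate into the two binomial factors.
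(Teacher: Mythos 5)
Your reduction of the unfavorable event is exactly right and matches the paper: honesty of Alice gives $g^{AA}=N$, so an outcome unfavorable to her forces $m^{B*}\neq m^A$, $g^{BB}=N$, and $g^{BA}\le g^{AB}$; the constraint $g^{BB}=N$ pins Bob's transcript to $s^{B*}_j=f(k^B_j,m^{B*})$; success of the test becomes $\Omega\subseteq S$; and the final bound splits into the combinatorial factor $\binom{t}{n}/\binom{N}{n}$ and a binomial tail in $q$. The genuine gap is in your probabilistic model of the $X_j$: you treat $X_j=\mathbf{1}[f(k^B_j,m^{B*})=f(k^B_j,m^A)]$ as i.i.d.\ Bernoulli$(p)$ over the key randomness, so that $|S|\sim\mathrm{Bin}(N,p)$ and your ``exact identity'' weights each $t$ by $\Pr[\mathrm{Bin}(N,p)=t]$. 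That is valid only if $m^{B*}$ is chosen independently of $k^B_{\rm et}$ --- but $k^B_{\rm et}$ is Bob's own key, handed to him in the key-distribution phase, and the adversary is computationally unbounded. A malicious Bob chooses $m^{B*}$ (equivalently $s^{B*}$) as a function of $(m^A,k^B_{\rm et})$, e.g.\ by exhaustively searching for a message maximizing $|S|$, so $S$ and $t=|S|$ are adversarially controlled rather than binomially distributed. Consequently your identity, and the step ``$\sum_t\Pr[\mathrm{Bin}(N,p)=t]\le 1$,'' prove the bound only for a weakened, non-adaptive adversary; as written they do not cover the actual threat model.

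The repair is exactly what the paper does, and your skeleton survives it almost verbatim: condition on Bob's entire view $(m^A,k^B_{\rm et})$, under which $S$ and $t$ are deterministic. The residual randomness is $\Omega$ --- which is independent of the view, since the marginal of $k^B_{\rm et}$ is uniform whatever $\Omega$ is --- and the keys $k^A_j$ for $j\notin\Omega$, which are fresh and unknown to Bob. Then $\Pr[\Omega\subseteq S]=\binom{t}{n}\binom{N}{n}^{-1}$ over $\Omega$ alone, and conditioned on success the off-$\Omega$ collisions $Y_j$ are independent with parameter at most $q$ by Lemma~\ref{lem:Hash_func}, giving the tail $\sum_{u\ge t-n}\binom{N-n}{u}q^u(1-q)^{N-n-u}$; the conditional bad probability is at most the product, hence at most the maximum over $t$. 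This is precisely why $\epsilon_2$ is defined as a \emph{maximum} over $t$ rather than a binomially weighted sum: $t$ is chosen by the adversary, not by nature. One further point you flagged but left open --- adaptivity of $m^{B*}$ to the success bit --- closes cleanly here: any $m^{B*}$ consistent with the already-transmitted $s^{B*}$ induces the same set $S=\{j:s^{B*}_j=f(k^B_j,m^A)\}$, so fixing $m^{B*}$ in advance is without loss of generality, which is the paper's justification as well.
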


\begin{proof}
Since the protocol is invariant under the permutation of the roles of Alice and Bob, it suffices to show for the case where Alice is honest and Bob is malicious.

In this case, Alice always submits the correct hash value $s^A_i=f(k^A_i,m^A)$ and the correct data $m^A$ to the TTP.
On the other hand, according to Definition \ref{def:security}, Bob's goal is to (i) submit some hash value $s^{B*}$ and succeed in the equality-testing phase, knowing $m^A$ and $k^B$,
and then (ii) submit some $m^{B*}(\ne m^A)$ in the dispute-resolution phase and let the TTP announce ``$m^{B*}$ is correct'' or ``undecidable.''

Note that in the equality-testing phase, Bob is informed only of the result, ``success'' or ``failure,'' and that he needs to submit $m^{B*}$ only when it was ``success.''
Thus his success probability does not change even if he fixes $m^{B*}$ in advance in the equality-testing phase.
Therefore, we may modify Bob's goal as follows.
\begin{itemize}
\item[] {\bf Malicious Bob's goal}:
(Knowing $m^A$, $k^B$) choose the values of both $s^{B*}$ and $m^{B*}(\ne m^A)$ in advance, then submit them to the TTP, and let the TTP announce ``success'' in the equality-testing phase, and ``$m^{B*}$ is correct'' or ``undecidable'' in the dispute-resolution phase.
\end{itemize}

In light of the description of Protocol \ref{protocol_main}, the above goal is equivalent to selecting $s^{B*}$ and $m^{B*}(\ne m^A)$ satisfying
\begin{eqnarray}
s_j^{B*}&=&f(k^B_j,m^A)\ {\rm for}\  j\in\Omega,
\label{eq:Bob_attack_condition0A}
\\
g^{BB}&=&N,
\label{eq:Bob_attack_condition0B}
\\
g^{AB}&\ge&g^{BA},
\label{eq:Bob_attack_condition0C}
\end{eqnarray}
where we used the fact that $k^A_j=k^B_j$ for $j\in\Omega$ in deriving (\ref{eq:Bob_attack_condition0A}).
By noting that condition (\ref{eq:Bob_attack_condition0B}) means $s^{B*}_i=f(k^B_i,m^{B*})$ for all $i=1,\dots,N$, we can further rewrite these conditions as
\begin{align}
&f(k^B_j,m^{B*})=f(k^B_j,m^A)\ {\rm for}\ \forall j\in\Omega,
\label{eq:Bob_attack_condition1}\\
&\left|\left\{\,j\,|f(k^A_j,m^A)=f(k^A_j,m^{B*})\right\}\right|\ge t,
\label{eq:Bob_attack_condition2}
\end{align}
where
\begin{equation}
t:=\left|\left\{\,j\,|f(k^B_j,m^A)=f(k^B_j,m^{B*})\right\}\right|.
\label{eq:t_defined}
\end{equation}

Below we will evaluate the probabilities of conditions (\ref{eq:Bob_attack_condition1}) and (\ref{eq:Bob_attack_condition2}).

Condition (\ref{eq:Bob_attack_condition1}) says that $t$ subscripts of (\ref{eq:t_defined}) (which are uniquely determined by $m^{B*}$) include $\Omega$.
Bob must select those $t$ subscripts (by selecting $m^{B*}$) without knowing $\Omega$.
Hence (\ref{eq:Bob_attack_condition1}) holds with a probability
\begin{equation}
p_{\rm et}(t)={N\choose t}{t\choose n}\left({N\choose t}{N\choose n}\right)^{-1}=\frac{(N-n)!t!}{N!(t-n)!}.
\label{eq:p_et}
\end{equation}

Condition (\ref{eq:Bob_attack_condition2}) demands that $f(k^A_j,m^A)=f(k^A_j,m^{B*})$ holds for more than $t-n$ subscripts $j\in\{1,\dots,N\}\setminus\Omega$.
However, Bob does not know keys $k^A_j$ corresponding to those indices $j\in\{1,\dots,N\}\setminus\Omega$, since they are generated independently of $k_j^B$'s (cf. comment inside parentheses in Protocol \ref{protocol_main}, step 1).
Due to this fact and Lemma \ref{lem:Hash_func}, condition (\ref{eq:Bob_attack_condition2}) holds only with a probability
\begin{equation}
p_{\rm dr}(t)\le\sum_{u=t-n}^{N-n}{N-n \choose u}q^u(1-q)^{N-n-u}.
\label{eq:p_arb}
\end{equation}

For a fixed value of $t$, Bob's success probability is upper bounded by $p_{\rm et}(t)p_{\rm dr}(t)$.
Thus we have the lemma.
\end{proof}

\begin{lemma}
For $r\ge2^8$, $\epsilon_1\le2^{-8}$, 
$n=\lceil3\log_2(1/\epsilon_1)\rceil$, $N=2n$, and $l=\lceil\log_2 r\rceil$, we have $\epsilon_2\le\epsilon_1$.
\end{lemma}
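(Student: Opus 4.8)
The plan is to turn the maximum defining $\epsilon_2$ into a single clean geometric bound. Since $N=2n$, I would first reparametrize by $j:=t-n$, so that $j$ ranges over $\{0,1,\dots,n\}$ and $\epsilon_2=\max_{0\le j\le n}P(j)$ with $P(j):=p_{\rm et}(n+j)\,p_{\rm dr}(n+j)$, where $p_{\rm et}(n+j)=\frac{n!\,(n+j)!}{(2n)!\,j!}$ and $p_{\rm dr}(n+j)=\sum_{u=j}^{n}\binom{n}{u}q^{u}(1-q)^{n-u}=\Pr[\mathrm{Bin}(n,q)\ge j]$. As $j$ increases, $p_{\rm et}$ grows from $1/\binom{2n}{n}$ to $1$ while the tail $p_{\rm dr}$ shrinks from $1$ to $q^{n}$, so the maximizer is interior; the key is to bound the two factors by quantities whose product telescopes under summation.

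For the first factor I would write it as the telescoping product $p_{\rm et}(n+j)=\prod_{i=1}^{n-j}\frac{j+i}{n+j+i}$ and observe that every factor is at most $\tfrac12$ (since $j+i\le n$ for $1\le i\le n-j$), yielding the exponential bound $p_{\rm et}(n+j)\le 2^{-(n-j)}$. For the tail I would use a union bound over the $\binom{n}{j}$ index-sets of size $j$ that could simultaneously be hash collisions, each contributing probability at most $q^{j}$, giving $p_{\rm dr}(n+j)\le\binom{n}{j}q^{j}$. Finally I would pin down $q$: from $r\ge 256$ we get $l=\lceil\log_2 r\rceil\ge 8$ and $2^{-l}\le 1/r$, while $\lceil r/l-1\rceil<r/l$, so $q=\lceil r/l-1\rceil 2^{-l}<1/l\le 1/8$.

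Combining the two bounds gives $P(j)\le 2^{j-n}\binom{n}{j}q^{j}=2^{-n}\binom{n}{j}(2q)^{j}$, a single term of a binomial expansion, so the troublesome maximum can simply be dominated by the full sum:
\[
\epsilon_2=\max_{0\le j\le n}P(j)\le 2^{-n}\sum_{j=0}^{n}\binom{n}{j}(2q)^{j}=\Big(\tfrac{1+2q}{2}\Big)^{n}<\Big(\tfrac{5}{8}\Big)^{n}.
\]
To close, I would use $5/8<2^{-1/3}$ to get $(5/8)^{n}\le 2^{-n/3}$, together with $n=\lceil 3\log_2(1/\epsilon_1)\rceil\ge 3\log_2(1/\epsilon_1)$, which gives $2^{-n/3}\le\epsilon_1$; chaining these inequalities yields $\epsilon_2\le\epsilon_1$.

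The step I expect to be most delicate is the control of the hypergeometric-type factor $p_{\rm et}(n+j)$: a crude bound such as $p_{\rm et}\le 1$ loses the argument entirely, whereas the per-factor estimate $\le\tfrac12$ is exactly what produces the $2^{-(n-j)}$ that cancels the $2^{j}$ arising from $\binom{n}{j}(2q)^{j}$ and lets the binomial theorem collapse the maximum into the geometric factor $(5/8)^{n}$. Everything after that --- the union bound, the binomial summation, and the final numerical comparison $5/8<2^{-1/3}$ --- is routine. It is also worth noting that the two hypotheses enter only at isolated points, namely $r\ge 256$ through $q<1/8$ and the definition of $n$ through $2^{-n/3}\le\epsilon_1$, so the proof never needs a finer use of the bound $\epsilon_1\le 2^{-8}$.
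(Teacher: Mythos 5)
Your proof is correct, but it takes a genuinely different route from the paper's. The paper splits the range of $t$ at $3n/2$: for $t\le 3n/2$ it bounds only the hypergeometric factor, via $p_{\rm et}(t)\le(t/2n)^n\le\epsilon_1$, discarding $p_{\rm dr}$; for $t>3n/2$ it discards $p_{\rm et}$ and invokes the large-deviation bound $p_{\rm dr}(t)\le(2n-t+1)2^{-nD(p\|q)}$ (Cover--Thomas, Theorem 11.1.4) with $p=t/n-1\ge 1/2$ and $q\le 1/8$, hence $D(p\|q)\ge 1/2$, and it then needs the hypothesis $\epsilon_1\le 2^{-8}$ (i.e., $n\ge 24$) to absorb the polynomial prefactor $(n/2+1)2^{-n/6}\le 1$. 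You instead retain both factors multiplicatively: the per-factor estimate on the telescoped product gives $p_{\rm et}(n+j)\le 2^{-(n-j)}$ (each factor $\frac{j+i}{n+j+i}\le\frac12$ because $j+i\le n$, which checks out), the union bound gives $p_{\rm dr}(n+j)=\Pr[\mathrm{Bin}(n,q)\ge j]\le\binom{n}{j}q^j$, and the product $2^{-n}\binom{n}{j}(2q)^j$ is a single term of a binomial expansion, so the maximum is dominated by the full sum $\bigl(\frac{1+2q}{2}\bigr)^n<\bigl(\frac58\bigr)^n\le 2^{-n/3}\le\epsilon_1$. Your estimate $q<1/l\le 1/8$ from $r\ge 256$ and the numerical step $5/8<2^{-1/3}$ are both valid. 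What your approach buys: it is entirely elementary (no case split, no KL-divergence bound, no external theorem), and it is strictly stronger in the hypotheses --- as you note, $\epsilon_1\le 2^{-8}$ is never used, whereas the paper's absorption of the prefactor genuinely requires it; both proofs use $r\ge 256$ only through $q\le 1/8$. What the paper's approach buys is mainly modularity: the Chernoff-type tail bound would survive parameter regimes (e.g., $N$ not equal to $2n$, or larger $q$) where the crude union bound $\binom{n}{j}q^j$ becomes lossy, but for the stated parameters your argument is the cleaner of the two.
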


\begin{proof}
Since $N=2n$, $p_{\rm et}(t)$ of (\ref{eq:p_et}) can be bounded as
\begin{eqnarray}
p_{\rm et}(t)&=&\frac{n!t!}{(2n)!(t-n)!}\nonumber\\
&=&\frac{t}{2n}\cdot\frac{t-1}{2n-1}\cdots\frac{t-(n-1)}{2n-(n-1)}\nonumber\\
&\le&\left(\frac{t}{2n}\right)^n,
\end{eqnarray}
where we used that fact that $\frac{t-s}{2n-s}\le\frac{t}{2n}$ for $0\le s\le t\le 2n$.

By Theorem 11.1.4 of Ref. \cite{cover2012elements}, $p_{\rm dr}$ of (\ref{eq:p_arb}) can be bounded as
\begin{eqnarray}
p_{\rm dr}(t)&\le& (2n-t+1)2^{-nD(p||q)},\\
D(p||q)&=&p\log_2\frac{p}{q}+(1-p)\log_2\frac{1-p}{1-q},\\
p&=&t/n-1.
\end{eqnarray}

If we choose $n$ as specified by the lemma, we have for $t\le3n/2$,
\begin{equation}
p_{\rm et}(t)\le\epsilon_1.
\label{eq:bound_p_et}
\end{equation}
On the other hand for $t> 3n/2$, we have $p\ge1/2$ and also $q\le 1/8$ due to $r\ge256$.
Then we have $D(p||q)\ge 1/2$ and thus
\begin{equation}
p_{\rm dr}(t)\le(n/2+1)2^{-n/2}\le (n/2+1)2^{-n/6}\cdot \epsilon_1\le \epsilon_1,
\label{eq:bound_p_dr}
\end{equation}
where the last inequality follows by noting that $n\ge3\log_2(1/\epsilon_1)\ge24$.

Combining (\ref{eq:bound_p_et}) and (\ref{eq:bound_p_dr}), we obtain the lemma.
\end{proof}

\subsection{Proof of Theorem \ref{the:main}}

Unlike the ideal situation of Lemma \ref{lmm:Thm_with_ideal_channels}, in the actual situation described in Definition \ref{def:ET_protocol} one can only use unauthenticated public channels.
There Alice, Bob, and the TTP must use information-theoretic MAC and/or one-time pad encryption wherever necessary.

First, communications in Step 1 of the equality-testing phase must be secret in an information-theoretic sense.
This can be realized by using the one-time pad (OTP) encryption, and consumes $4nl$ bits of the secret key.

In addition, all eight communication rounds in the equality-testing and the dispute-resolution phases must be authenticated in an information-theoretic manner.
Since the length $r'$ of the content of each round satisfies $r'\le\max\{r,2nl\}$, one can authenticate each round with $\epsilon_1$-security by using the MAC scheme of Lemma \ref{lem:ITSMAC}.
Thus for all eight rounds in total, one can achieve the overall authenticity with $8\epsilon_1$-security by consuming secret keys of $16(n+l)$-bits.

By combining the secrecy and authenticity thus realized with the result of Lemma \ref{lmm:Thm_with_ideal_channels}, and by letting $\epsilon_1=\epsilon/16$, we can achieve the $\epsilon$-nonmodifiability in the actual situation described in Definition \ref{def:ET_protocol}.

The breakdown of secret key consumption here is:
$4nl$ bits for the equality testing keys $k^A_{\rm et}, k^B_{\rm et}$, and $4nl+16(n+l)$ bits for the secure communication keys $k^A_{\rm sc}, k^B_{\rm sc}$.
The latter consists of $4nl$ bits for the OTP encryption of $s^A_i,s_B^j$, and $16(n+l)$ bit for generating MAC tags for the eight rounds of communication.
All of these add up to $8(nl+2n+2l)$ bits.

The communication length of the equality-testing phase consists of $4nl$ bits for $s^A_i,s_B^j$, four bits for the TTP's announcements, and $8(n+l)$ bits for the MAC tags for these four rounds of communication.
These add up to $4(nl+2n+2l+1)$ bits.

Similarly, for the dispute-resolution phase there are $2r$ bits of communication for $m^{A*}$ and $m^{B*}$, two bits for the TTP's announcements, and $8(n+l)$ bits for the MAC tags for these four rounds, all of which add up to $2(r+4n+4l+2)$ bits.

\section{Summary and outlook}

We proposed a new type of cryptographic protocols called ``equality-testing protocol with dispute resolution'' (ET protocols) and also presented an explicit protocol that is information-theoretically secure and efficient.
Our ET protocols enable two remote users each having data to (i) verify the equality of their data, and (ii) whenever a discrepancy is found afterwards, determine which of the two modified his data.
The ET protocols are particularly useful in and compatible with quantum key distribution networks (QKDNs), and can also greatly enhance the functionality of QKDNs.

A possible future work is to reduce the amount of communication needed in the dispute-resolution phase.
It will also be interesting to actually implement this type of protocols in one of real QKDNs.

\section*{Acknowledgment}
M.F. and T.T. were supported in part by ``ICT Priority Technology Research and Development Project'' (JPMI00316) of the Ministry of Internal Affairs and Communications, Japan.
This work was supported in part by JSPS KAKENHI Grant Numbers JP18H05237, JP20K03779, JP21K03388.

\bibliographystyle{IEEEtran}
\bibliography{DigitalEndorsement}

\begin{IEEEbiographynophoto}{Go Kato} was born in Japan, in 1976.
He received the M.S. and Ph.D. degrees in science from The University of Tokyo in 2001 and 2004, respectively.
 From 2004 to 2022, he worked with the NTT Communication Science Laboratories, NTT Corporation, as a Scientist. In 2022, he joined NICT (National Institute of Information and Communications Technology), as a research manager.
He has been engaged in the theoretical investigation of quantum information. He is especially interested in mathematical structures emerging in the field of quantum information. He is a member of the Physical Society of Japan.

\end{IEEEbiographynophoto}

\begin{IEEEbiographynophoto}{Mikio Fujiwara} received the B.S. and M.S. degrees in electrical engineering and the Ph.D. degree in physics from Nagoya University, Nagoya, Japan, in 1990, 1992, and 2002, respectively. He has been involved R\&D activities at NICT (previous name CRL,  Ministry of Posts and Telecommunications of Japan) since 1992.
\end{IEEEbiographynophoto}

\begin{IEEEbiographynophoto}{Toyohiro Tsurumaru} was born in Japan in 1973.
He received the B.S. degree from the Faculty of Science, University of Tokyo, Japan in 1996,
and M.S. and Ph.D. degrees in physics from the Graduate School of Science, University of Tokyo, Japan in 1998 and 2001, respectively.
Then he joined Mitsubishi Electric Corporation in 2001.
His research interests include theoretical aspects of quantum cryptography and of modern cryptography.
\end{IEEEbiographynophoto}

\end{document}